\documentclass[12pt]{article}
\usepackage[T1]{fontenc}
\usepackage[margin=1in]{geometry} 
\usepackage[onehalfspacing]{setspace} 
\usepackage{graphicx}
\usepackage{microtype, lmodern}

\usepackage{natbib}
\usepackage[usenames, dvipsnames]{xcolor}
\usepackage[colorlinks=true]{hyperref}
\hypersetup{linkcolor=blue, citecolor=blue}

\usepackage{amsmath, amssymb, amsthm, thmtools, dsfont, bbm, enumerate}
\allowdisplaybreaks 

\usepackage{titlesec}
\titlespacing{\paragraph}{0pt}{2.25ex plus 1ex minus .2ex}{0.4em}

\theoremstyle{definition} 

\theoremstyle{plain} 
\newtheorem{assumption}{Assumption}

\newtheorem{theorem}{Theorem}

\begin{document}

\onehalfspacing

\title{Panel Data Estimation of Individual Demand in Markets with Many Consumers}
\author{Sarah Moon and Whitney K. Newey}
\date{\today}

\maketitle

\begin{abstract}
The purpose of this paper is to consider whether and how panel data can be used to estimate individual demand, as opposed to market-level demand, while accounting for simultaneity resulting from prices being determined in markets. We consider linear demand models and random coefficient demand models, together with linear supply models. We find that the bias of individual demand estimates obtained using familiar panel data methods, like differencing, disappears as the number of consumers in each market grows, as long as the time-varying, i.e. idiosyncratic, component of preferences is orthogonal to the unobserved, time-varying component of supply. This approximate control is assumed in many panel discrete choice models and is plausible in other models where idiosyncratic preferences represent random variation in preferences over time. Macroeconomic effects can be allowed for by including regressors characterizing time effects, such as trends and time period dummies, or fixed time effects.
\end{abstract}

\section{Introduction}
It has long been known that fixed effects estimation controls for time-invariant endogeneity in panel data for individuals. 
The purpose of this paper is to consider whether and how panel data can be used to estimate individual demand effects while controlling for endogeneity resulting from prices being determined in markets. 
We find that fixed effects estimation of price effects for individual demand is approximately correct when the number of consumers in a market is large and idiosyncratic (individual-specific, time-varying)  heterogeneity is independent of unobserved supply disturbances.
Such independence seems plausible in many settings where consumers in a market are distinct geographically or professionally from those who supply that market.

We obtain these results for a demand model similar to \cite{kennan1989simultaneous} where market demand is the sum of demands of individuals facing a common market price that is determined in equilibrium where market demand equals market supply.
We allow there to be many more consumers in a market than are present in the data. 
Approximate consistency holds because each individual's contribution to the equilibrium price becomes negligible as the number of individuals in the market grows large. 
This means prices are approximately exogenous from any single consumer's perspective, even though they are endogenous at the market level. 
We derive the rate at which bias in individual demand estimates shrinks as the number of individuals in the market increases.

We innovate in considering panel data for individual consumers where market prices vary over time.
The key insight is that time variation in the market prices, combined with idiosyncratic preferences that are independent of supply shocks, leads to fixed effects estimation being approximately correct.  

Independence between idiosyncratic preferences and supply shocks could be a concern if macroeconomic time effects were common to both.
Time effects may be controlled for by including observable time variables such as trends or time period dummies.
There are some limits to controlling for time effects, for example, in losing identification when there are no restrictions on how demand varies across time and individuals \citep*{hoderlein2012nonparametric, chernozhukov2013average}. 
As in many other settings, including time period dummies could provide a plausible intermediate approach that allows identification of important properties of individual demand from panel data.
We find that with two-way fixed effects (individual and time), estimation is approximately correct when individuals are sampled from different markets.

We first consider a linear model of individual demand with additive individual fixed effects and idiosyncratic disturbances.
In this well-understood setting, it is easy to see how independence of supply and idiosyncratic demand shocks provides identifying information while fixed individual effects allow for endogeneity of prices.
We then generalize these results to allow for fixed time period effects in individual demand and to allow for a random coefficient on price. 
In all settings we consider a linear average supply function for tractability. 
The results of this paper can be extended to general supply functions, a direction we are currently pursuing, but the extension requires slightly different proof techniques. The results can also be easily extended to allow for additional additive covariates in the demand and supply models, but we omit them in this paper for notational simplicity.

Throughout the paper we characterize the rate at which bias of the fixed effects estimator vanishes as the number of individuals in a market grows, rather than seeking strict identification. 
When market demand is the sum of individual demands, each individual's influence on the equilibrium price always results in a small but nonzero bias.  
We show that this bias shrinks as the number of consumers in a market increases and characterize the rate at which it shrinks, so that individual price effects are approximately identified from panel data.

This paper motivates the approach taken to estimating price effects using scanner data in \cite*{chernozhukov2019demand}.
That paper relied on individual-specific coefficient models to estimate average equivalent variation and deadweight loss from a panel of consumers. 
The results given in our paper suggest that this approach may be approximately correct when idiosyncratic preferences are independent of supply shocks.
This paper also motivates the approach of \cite*{dubois2020well} and \cite*{lanier2023estimating}. 
In their model the idiosyncratic preferences are i.i.d. Logit (Type I extreme value) disturbances.
If the assumption that the idiosyncratic preferences are independent of supply shocks is added, the results given in our paper suggest approximate correctness of their approach, although for a different model of demand.
We emphasize that these are conjectures at this time but initial results are promising.

This paper relates to several strands of literature. 
A large literature in industrial organization identifies and estimates market-level demand using instruments \citep*{berry1995automobile, petrin2002quantifying, berry2004differentiated, berry2014identification, berry2024nonparametric}.
We innovate in showing that for panel data on individuals price effects can be identified when idiosyncratic preferences are independent of unobserved supply shocks, without instruments. 
This result also differs from that of \cite*{mackay2025estimating}, who show that with covariance restrictions, market-level demand can be identified and estimated.
Here only idiosyncratic preferences are required to be independent of unobserved supply effects, while time-constant unobserved preferences may still be correlated with supply shocks. This is a different restriction from requiring market demand and supply unobservables to be uncorrelated. 

A similar comparison can be made with an older literature on identifying structural coefficients under certain covariance restrictions on the structural errors in the models \citep*{hausman1983identification, hausman1987efficient}. 
Panel data allows us to modify the zero covariance restriction to only apply to idiosyncratic preferences rather than to the whole structural error.

In the econometrics literature, \cite*{hoderlein2014nonparametric} use panel data to estimate market-level demand, where each individual unit is a market. \cite*{hausman2016individual} and \cite*{chernozhukov2025fisher} estimate functionals of individual demand using panel data, where the individual unit is a consumer. While these methods do not consider price endogeneity, we motivate these methods by showing price endogeneity can be small under certain conditions.

Section \ref{sec2} of the paper gives results for a linear panel demand model without a time effect and Section \ref{sec3} for a linear panel demand model with a time effect. In Section \ref{sec4} we derive results for a random coefficient demand model without a time effect and in Section \ref{sec5} we derive results for a random coefficient demand model with a time effect. Proofs of results are deferred to the Appendix.

\section{A Linear Demand Model for Panel Data} \label{sec2}
In this section, we consider a linear demand model and illustrate why, in panel data, when estimating individual demand, bias caused by price endogeneity might disappear with the number of consumers in the market if time differences of individual demand and aggregate supply shocks do not covary and preference variation is uncorrelated across individuals.
We suppose price is determined by linear market supply and individual demand equations. 
We give conditions for time differences in the market price to be nearly uncorrelated with idiosyncratic preference differences, so that a regression of quantity differences on market price differences would be approximately consistent. 

Consider a single market with $M$ consumers $i = 1, \dots M$. For a sample of size $N$ of those consumers, we observe quantity $q_{it}$ and price $p_t$ of a single good for each consumer $i$ over $T \geq 2$ time periods $t = 1, \dots, T$.
Suppose we can write demand for every consumer in the market as a linear function of price,
\begin{align*}
    q_{it} = \beta p_t + \alpha_i + \eta_{it},
\end{align*}
where $\alpha_i$ is unobserved time-invariant individual heterogeneity and $\eta_{it}$ is unobserved idiosyncratic individual heterogeneity.
Then average demand in each time period $t$, where the average is taken over all individuals in the market, is given by
\begin{align*}
    \bar q_t &= \beta p_t + \bar \alpha + \bar \eta_{t}, \qquad \text{for } \bar q_t \equiv \frac{1}{M} \sum_{i=1}^M q_{it}, \; \bar \alpha \equiv \frac{1}{M} \sum_{i=1}^M \alpha_i, \; \bar \eta_t \equiv \frac{1}{M} \sum_{i=1}^M \eta_{it}.
\end{align*}

Suppose, as in \cite{kennan1989simultaneous}, that we can write the inverse supply model as a linear model of average supply,
\begin{align*}
    \bar q_t &= \phi + \delta p_t + \varepsilon_t,
\end{align*}
for $\varepsilon_t$ an unobserved supply shock.
Solving for equilibrium price $p_t$ by setting demand equal to supply then gives
\begin{align*}
    p_t = \frac{1}{\beta-\delta} (\phi + \varepsilon_t - \bar{\alpha}- \bar{\eta}_{t} ).
\end{align*}

As usual for panel data we can difference these equations to eliminate the individual effect, obtaining 
\begin{align*}
   \Delta q_{it} = \beta \Delta p_t + \Delta\eta_{it}, \qquad \Delta p_t = \frac{1}{\beta-\delta} (\Delta\varepsilon_t - \Delta\bar\eta_{t} ),
\end{align*}
where $\Delta$ denotes the time difference. 
This is a model of linear supply and demand that possesses the classic price endogeneity problem. 
Due to the simultaneity of the system, price $p_t$ is correlated with individual unobserved preferences $\alpha_i$ and $\eta_{it}$. This dependence could cause naive estimates of the individual demand equation that do not exploit instrument-like conditions to be inconsistent. However, in what follows, we show that this inconsistency will disappear as the number of consumers in the market $M$ grows, as long as time differences in supply shocks are uncorrelated with time differences in individual idiosyncratic preference shocks $\eta_{it}$. In particular, the inconsistency will disappear at the rate $O(M^{-1})$.

The key condition for the inconsistency to vanish as $M$ grows is the following one. 

\begin{assumption}\label{ass1}
    $Cov(\Delta\varepsilon_t,\Delta\eta_{it}) = 0$ and
    $Cov(\Delta\eta_{it},\Delta\eta_{jt})=0$ for all $i\neq j$.
\end{assumption}

Assumption \ref{ass1} states that the time difference of the supply shock $\Delta\varepsilon_t$ is orthogonal to the time difference of the preference shock and that the preference shocks for different individuals are uncorrelated with each other. 
We will additionally assume for notational simplicity that $Var(\Delta \eta_{it})$ is common across $i$ and $t$ and $Var(\Delta p_t)$ is common across $t$.
Under this condition, 
\begin{align*}
    Cov(\Delta p_t,\Delta \eta_{it}) 
    &= \frac{1}{\beta-\delta}( Cov(\Delta\varepsilon_t,\Delta\eta_{it}) -
    Cov(\Delta\bar\eta_{t},\Delta\eta_{it})) =  \frac{-1}{\beta-\delta}\frac{Var(\Delta \eta_{it})}{M}.
\end{align*}
Then we immediately obtain the following result:

\begin{theorem}\label{thm:linear}
    Suppose Assumption \ref{ass1} holds. Then the slope estimate $\hat{\beta}$ from the regression of $\Delta q_{it}$ on  $\Delta p_t$ will have probability limit as $N,T \to \infty$ 
    \begin{align*}
    plim(\hat\beta)=\beta+\frac{Cov(\Delta p_t,\Delta \eta_{it}) }{Var(\Delta p_t)}
    &=\beta + \frac{1}{M} \frac{-1}{\beta-\delta} \frac{Var(\Delta \eta_{it})}{Var(\Delta p_t)}.
\end{align*}
\end{theorem}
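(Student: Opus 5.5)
The plan is the standard argument for the probability limit of an OLS slope, followed by the covariance computation displayed just before the theorem. I take $\hat\beta$ to be the pooled regression of $\Delta q_{it}$ on $\Delta p_t$, with an intercept, over the $N$ sampled consumers and $t=2,\dots,T$. Substituting the differenced demand equation $\Delta q_{it}=\beta\Delta p_t+\Delta\eta_{it}$ gives
\begin{align*}
\hat\beta-\beta=\frac{\widehat{Cov}(\Delta p_t,\Delta\eta_{it})}{\widehat{Var}(\Delta p_t)},
\end{align*}
where the hats denote sample moments pooled over $i$ and $t$.

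The first step is to show that, as $N,T\to\infty$, the numerator and denominator converge to their population counterparts $Cov(\Delta p_t,\Delta\eta_{it})$ and $Var(\Delta p_t)$. I would justify this with a weak law of large numbers for the pooled averages. The needed regularity is stationarity and weak dependence over $t$ of $(\Delta\varepsilon_t,\Delta\eta_{1t},\dots,\Delta\eta_{Mt})$, together with finite second moments. Divergence of $T$ is what makes the price moments converge, because $\Delta p_t$ is common to all consumers in a period and so $N\to\infty$ alone cannot average it out. The moments are constant across $(i,t)$ by the maintained simplification that $Var(\Delta\eta_{it})$ and $Var(\Delta p_t)$ do not depend on $i$ or $t$. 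I also need $Var(\Delta p_t)>0$, which holds whenever $\Delta\varepsilon_t$ or $\Delta\bar\eta_t$ is nondegenerate. The continuous mapping theorem then gives the first equality in the theorem.

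The second step substitutes the equilibrium expression $\Delta p_t=(\beta-\delta)^{-1}(\Delta\varepsilon_t-\Delta\bar\eta_t)$ and uses bilinearity of covariance. Assumption \ref{ass1} makes $Cov(\Delta\varepsilon_t,\Delta\eta_{it})=0$. Expanding $Cov(\Delta\bar\eta_t,\Delta\eta_{it})=M^{-1}\sum_j Cov(\Delta\eta_{jt},\Delta\eta_{it})$, Assumption \ref{ass1} kills every cross term, leaving only $M^{-1}Var(\Delta\eta_{it})$. This is exactly the display preceding the theorem, and plugging it in yields the second equality.

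The main obstacle is the first step: deciding which limit theory makes the pooled sample moments converge, since the price is shared across individuals within each period. I would handle it by first averaging over $i$ within each period. For each $t$, the within-period average of $\Delta p_t\Delta\eta_{it}$ over sampled $i$ is $\Delta p_t$ times a cross-sectional mean of $\Delta\eta_{it}$. I would then apply an ergodic LLN over $t$, relying on the time-series regularity stated above. The algebra in the second step is routine.
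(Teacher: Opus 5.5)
Your proposal is correct and takes essentially the same route as the paper. The paper obtains the theorem directly from the usual OLS probability-limit formula together with the covariance display preceding it: bilinearity plus Assumption \ref{ass1} leaves only the $j=i$ term, $M^{-1}Var(\Delta\eta_{it})$. The only difference is that you spell out the time-series regularity (stationarity, weak dependence or ergodicity over $t$, finite second moments, $Var(\Delta p_t)>0$) that the paper leaves implicit in its appeal to ``sufficient independent variation in $\Delta p_t$.'' You also rightly identify $T\to\infty$ as the source of that variation: the price is common within a period, and with a single market $N\le M$ cannot grow on its own.
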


In this result we assume that there is sufficient independent variation in $\Delta p_t$ across observations to make the usual formula for the probability limit of OLS valid.
Such variation could come from $T$ growing or from $T$ fixed and sampling consumers from independent markets with common $Var(\Delta p_t)$ and $M$.
From this formula we see that as the number of consumers $M$ grows $plim(\hat\beta)$ converges to the true $\beta$. 
In this result we think of taking $N,T \to \infty$ but keeping $M$ fixed as an asymptotic approximation of a large sample size. While we can also perform a finite-sample analysis of $\hat\beta$ and bound the size of the bias, we will not pursue this for the linear demand and supply models studied in this paper. 

This result demonstrates that under certain conditions, even when prices are endogenous due to simultaneity of the demand-supply system, standard panel data differencing can still recover individual demand slopes as markets become large. This follows because the source of price endogeneity when estimating individual demand is that equilibrium prices depend on individual preferences through average market demand. However, as the market grows large, the influence of any one individual's preferences on the market price becomes negligible, and we obtain the result above.

The usefulness of this result hinges on the plausibility of Assumption \ref{ass1}.
An important feature of Assumption \ref{ass1} is that it restricts only the time-varying component $\eta_{it}$ of the unobserved determinant $\alpha_i+\eta_{it}$ of preferences.
The fixed preference component $\alpha_i$ is allowed to covary with the supply shock in any way at all.
In many demand models $\eta_{it}$ is specified to be independent of all other variables, e.g. in the original Logit choice models of \cite{mcfadden1974conditional} and \cite*{berry1995automobile} where $\eta_{it}$ has a Type I Extreme Value distribution.
Such a specification is commonly used for other nonlinear panel data models as well, e.g. \cite{chamberlain1984panel}.
One way to motivate such a specification is to regard $\eta_{it}$ as a time-varying shock to preferences that represents individual taste for variety. 
Such time-varying individual preferences are generally allowed for in panel data demand models because they better fit demand data.
It seems plausible that these would be unrelated to supply shocks as long as the consumer is not employed by the firms supplying the product of interest.

One potential source of covariation in changes in preferences and supply is macroeconomic shocks that are common to supply and demand.
One way these can be allowed for in this model is through the use of observable regressors that depend on time.
For example, the above formula can allow for the presence of a constant in the panel differences regression.
The presence of a constant in the differences regression is equivalent to the presence of a time trend if several time periods are used in the regression or a time dummy in a two-period regression across different markets.
The nature of Assumption \ref{ass1} in ruling out covariance between time differences in preferences and supply shocks does indicate that it is important to check for time effects in panel demand models. As such, we now extend the model to allow for a fixed time effect, where the result will hold under different but related conditions.

\section{A Linear Demand Model with Time Effect} \label{sec3}

In this section, we extend the linear demand model of the previous section to allow for a fixed time effect. We show that bias caused by price endogeneity will disappear as the number of consumers in each market grows if individual demand and supply shocks are orthogonal across individuals and over time, and if preferences are uncorrelated across individuals and over time.

Suppose there is a good sold in many markets, each of which has $M$ consumers. We observe a sample of $N$ different consumers, $i = 1, \dots, N$, each from a different market $m$, over $T \geq 2$ time periods $t = 1,\dots,T$. In this sample we observe the quantity demanded $q_{it}$ and price $p_{it}$ of the good for each consumer $i$ in each time period $t$. Note that although we will assume in our model that prices do not vary across individuals in the same market, in the sample prices vary across individuals because each individual in the sample is drawn from a different market. This price variation is important for showing consistency for the linear demand model with a time effect. If all individuals come from the same market, in our model they face the same price and so prices would be collinear with the time effect.

Suppose that for each individual in every market we can write demand as a linear function of price,
\begin{align*}
    q_{it} = \beta p_{it} + \alpha_i + \phi_t + \eta_{it},
\end{align*}
where $\alpha_i$ is unobserved time-invariant individual heterogeneity, $\phi_t$ is fixed and unobserved individual-invariant time heterogeneity, and $\eta_{it}$ is unobserved idiosyncratic individual heterogeneity.

Let $\{i: i \in m\}$ denote the set of individuals $i$ in market $m$, which may include individuals not in the sample. Because price is common across all individuals in the same market, when analyzing a single market we will denote price as $p_{mt}$. The average quantity demanded in market $m$, where the average is taken over all individuals in market $m$, is given by
\begin{align*}
    \bar q_{mt} &= \beta p_{mt} + \bar \alpha_m + \phi_t + \bar \eta_{mt}, \qquad \text{for } \bar q_{mt} \equiv \frac{1}{M} \sum_{i \in m} q_{it}, \; \bar \alpha_m \equiv \frac{1}{M} \sum_{i \in m} \alpha_i, \; \bar \eta_{mt} \equiv \frac{1}{M} \sum_{i \in m} \eta_{it}.
\end{align*}

Suppose as before that in each market $m$ we can write the inverse supply model as a linear model of average supply,
\begin{align*}
    \bar q_{mt} &= \gamma + \delta p_{mt} + \varepsilon_{mt},
\end{align*}
where $\varepsilon_{mt}$ is an unobserved supply shock in market $m$.
Solving for equilibrium prices $p_{mt}$ in market $m$ by setting demand equal to supply, we obtain
\begin{align*}
    p_{mt} = \frac{1}{\beta-\delta} (\gamma + \varepsilon_{mt} - \bar{\alpha}_m - \phi_t - \bar{\eta}_{mt} ).
\end{align*}
Because we assume that each individual in the sample comes from a different market, in what follows we will use $\varepsilon_{it}$ instead of $\varepsilon_{mt}$ and $\bar\eta_{it}^m$ instead of $\bar\eta_{mt}$ to denote the observation for the individual $i$ in the sample coming from market $m$.

As usual for panel data with two-way fixed effects, for each individual $i$ in the sample we can subtract off sample unit-specific means and time-specific means to eliminate the unobserved fixed effects. In particular, 
\begin{align}
    \underbrace{q_{it} - \bar q_t - \bar q_i + \bar q}_{\displaystyle \tilde q_{it}} &= \beta \underbrace{\left(p_{it} - \bar p_t - \bar p_i + \bar p \right)}_{\displaystyle \tilde p_{it}} + \underbrace{\eta_{it} - \bar\eta_t - \bar\eta_i + \bar\eta}_{\displaystyle \tilde \eta_{it}} \label{eqn:est}, \\
    \underbrace{\left(p_{it} - \bar p_t - \bar p_i + \bar p \right)}_{\displaystyle \tilde p_{it}} &= \frac{1}{\beta-\delta} \big( \underbrace{\varepsilon_{it} - \bar \varepsilon_t - \bar \varepsilon_i + \bar \varepsilon}_{\displaystyle \tilde \varepsilon_{it}} - (\underbrace{\bar\eta_{it}^m - \bar \eta_t^m - \bar \eta_i^m + \bar \eta^m}_{\displaystyle \tilde \eta_{it}^m}) \big), \label{eqn:equil}
\end{align}
where variables with a time subscript $t$ are averaged over individuals $i$ in the sample, variables with an individual subscript $i$ are averaged over time periods $t$ in the sample, and variables without any subscript are averaged over both individuals $i$ and time periods $t$ in the sample. The exact definitions of all variables are available in the Appendix.

In general $\tilde p_{it}$ will not be orthogonal to $\tilde\eta_{it}$ due to price endogeneity caused by simultaneity of the system. There are two potential sources of endogeneity. The first is that $\tilde p_{it}$ depends on $\tilde \eta_{it}^m$, which is correlated with $\tilde \eta_{it}$, as can be seen in \eqref{eqn:equil}. The second, which we address with our key assumption, is that $\tilde p_{it}$ depends on $\tilde \varepsilon_{it}$, and supply shocks $\tilde \varepsilon_{it}$ may not be uncorrelated with idiosyncratic preferences $\tilde\eta_{it}$. Thus OLS on \eqref{eqn:est} will not be consistent. However, we show that this inconsistency will disappear as the number of individuals in each market $M$ grows. This result holds when individual idiosyncratic preference shocks are uncorrelated across both individuals and time periods, and supply shocks are uncorrelated with individual idiosyncratic preferences across both individuals and time periods. As in the linear model without a time effect, the inconsistency will disappear at the rate $O(M^{-1})$.

With the additional time effect, the key condition for the inconsistency to vanish as $M$ grows is now the following one. 

\begin{assumption}\label{ass2}
\begin{enumerate}[(i)]
    \item $Cov(\varepsilon_{it}, \eta_{js}) = 0$ for all $i,j,t,s$ in the sample;
    \item $E[\eta_{it}] = 0$, including for individuals not in the sample;
    \item $\eta_{it}$ is i.i.d. over $i$ and $t$, including individuals $i$ not in the sample, and $\varepsilon_{it}$ is i.i.d. over $i$ and $t$ for individuals $i$ in the sample.
\end{enumerate}
\end{assumption}

Assumption \ref{ass2} states that the market supply shock $\varepsilon_{it}$ is orthogonal to the individual preference shock $\eta_{js}$ and that preference shocks are uncorrelated across time and individuals. Assumption \ref{ass2} also requires that both $\eta_{it}$ and $\varepsilon_{it}$ are identically distributed over $i$ and $t$, although this assumption is for notational simplicity and can be relaxed, as discussed later.

Under this condition, we obtain the following result:
\begin{theorem} \label{thm:linear_time}
Suppose Assumption \ref{ass2} holds. Then the slope estimate $\hat\beta$ from the regression of $\tilde q_{it}$ on $\tilde p_{it}$ has probability limit as $N,T \to \infty$
\begin{align*}
    plim(\hat\beta) &= \beta + \frac{1}{M}\frac{-1}{\beta-\delta}\frac{Var(\eta_{it})}{E[\tilde p_{it}^2]}.
\end{align*}
\end{theorem}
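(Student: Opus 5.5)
The plan is to write the two-way within estimator in closed form, $\hat\beta = \sum_{i,t}\tilde p_{it}\tilde q_{it}/\sum_{i,t}\tilde p_{it}^2$. Substituting \eqref{eqn:est} gives
\begin{align*}
\hat\beta = \beta + \frac{(NT)^{-1}\sum_{i,t}\tilde p_{it}\tilde\eta_{it}}{(NT)^{-1}\sum_{i,t}\tilde p_{it}^2}.
\end{align*}
I would apply a law of large numbers to numerator and denominator as $N,T\to\infty$, using the i.i.d. structure in Assumption \ref{ass2}(iii) across sampled individuals, who come from distinct markets. The denominator should converge to $E[\tilde p_{it}^2]$, interpreted as the limit of its average. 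The numerator should converge to the limit of $E[\tilde p_{it}\tilde\eta_{it}]$. By Slutsky, the result then reduces to computing this cross moment.

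Next I would substitute \eqref{eqn:equil} to get
\begin{align*}
\tilde p_{it}\tilde\eta_{it} = \frac{1}{\beta-\delta}\bigl(\tilde\varepsilon_{it}\tilde\eta_{it} - \tilde\eta^m_{it}\tilde\eta_{it}\bigr).
\end{align*}
The term $\tilde\varepsilon_{it}\tilde\eta_{it}$ is bilinear in the $\varepsilon_{js}$ and $\eta_{ks}$. By Assumption \ref{ass2}(i) every cross covariance vanishes. Together with $E[\eta]=0$ from (ii), its expectation is therefore zero for every $N,T$. Note that $\phi_t$ and $\bar\alpha_m$ drop out of both equations under the within transformation, so no assumption on them is needed.

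The main step is $E[\tilde\eta^m_{it}\tilde\eta_{it}]$. I would write $\bar\eta^m_{js} = M^{-1}\eta_{js} + M^{-1}\sum_{k\in m(j),k\ne j}\eta_{ks}$. The second piece involves only unsampled consumers in market $m(j)$. By the independence in (iii) and the zero mean in (ii), it is uncorrelated with every sampled $\eta$, so it contributes nothing. Hence $E[\tilde\eta^m_{it}\tilde\eta_{it}] = M^{-1}E[\tilde\eta_{it}^2]$, since the double-demeaning operator is linear and applied identically to both arrays.

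For i.i.d. $\eta$ with variance $\sigma^2$, the two-way demeaning is a projection, which gives $E[\tilde\eta_{it}^2] = \sigma^2(1-1/N)(1-1/T)\to Var(\eta_{it})$. Combining, the cross moment tends to $-\frac{1}{\beta-\delta}\frac{Var(\eta_{it})}{M}$, which is the stated formula.

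I expect the obstacle to be rigor in the law-of-large-numbers step. The demeaned variables depend on sample averages, so they are not independent across $(i,t)$. I would handle this by expanding $\sum\tilde p\tilde\eta = \sum p\eta - NT\,\overline{(\cdot)}$ corrections, using the standard identity $\sum_{i,t}\tilde a_{it}\tilde b_{it} = \sum_{i,t}a_{it}b_{it} - T\sum_i \bar a_i\bar b_i - N\sum_t\bar a_t\bar b_t + NT\bar a\bar b$. Each correction term is then an average of independent pieces whose contribution is $O(1/N+1/T)$ in probability. Finite second (or fourth) moments of $\eta,\varepsilon$ would be assumed implicitly, as the paper does in Theorem \ref{thm:linear}.
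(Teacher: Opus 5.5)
Your proposal is correct and takes essentially the same route as the paper. You reduce to $E[\tilde p_{it}\tilde\eta_{it}]$, kill the $\tilde\varepsilon_{it}\tilde\eta_{it}$ term by Assumption \ref{ass2}(i), and obtain $E[\tilde\eta^m_{it}\tilde\eta_{it}] = M^{-1}Var(\eta_{it})(1-1/N)(1-1/T)$. Your split of $\bar\eta^m_{js}$ into the own share $M^{-1}\eta_{js}$ plus an independent remainder, combined with the projection diagonal, is a compact version of the paper's sixteen-term covariance expansion and gives the identical factor. Your only addition is the law-of-large-numbers step: the paper just asserts the plim, while your within-sum identity gives a workable way to justify it.
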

Note that because we sample individuals from independent markets, there is sufficient independent variation in $p_{it}$ across observations to make the usual formula for the probability limit of OLS valid.
From this formula we see that as the number of individuals $M$ in each market grows $plim(\hat\beta)$ converges to the true $\beta$ at rate $O(M^{-1})$. 

This result shows that even with a linear demand model that contains a fixed time effect, under certain conditions, the price endogeneity caused by simultaneity disappears as the number of individuals in all markets grows large. The requirements for this result to hold are stricter with the additional time effect. In particular, we now require orthogonality of the supply shock and preference shock to hold over time. Importantly, just as before, the individual preference component $\alpha_i$ of the unobserved determinant of preferences $\alpha_i + \phi_t + \eta_{it}$ is allowed to covary with all other variables, like the supply shock or the idiosyncratic preference shock. Note that Assumption \ref{ass2}(iii), which implies $Var(\eta_{it})$ is the same across all $i$ and $t$, is not crucial for the $O(M^{-1})$ rate to hold, as long as $Var(\eta_{it})$ can be uniformly upper bounded by a constant.

In order to have sufficient independent price variation we made the sampling assumption that each individual in the sample came from a different market. This sampling assumption can be relaxed to allow the sample to contain multiple individuals from the same market, so long as there remains sufficient independent price variation across individuals in the sample. We can also relax the assumption that all markets have exactly $M$ individuals; the result holds taking $M$ to be the size of the smallest market.

\section{A Random Coefficient Demand Model for Panel Data}
\label{sec4}

In this section, we extend the linear demand model to allow for a random coefficient on price. We show under similar conditions as before that the bias of the OLS slope estimator for the mean of the random coefficient will disappear as the number of consumers in the market grows.

As before, consider a single market with $M$ consumers $i=1,\dots,M$. For a sample of size $N$ of those consumers, we observe quantity $q_{it}$ and price $p_t$ of a single good for each consumer $i$ over $T \geq 2$ time periods $t = 1, \dots, T$.

Suppose we can write demand for every consumer in the market as
\begin{align*}
    q_{it} &= \beta_ip_t + \eta_{it},
\end{align*}
where $\beta_i$ is a random coefficient on price. 
Average demand in period $t$, where the average is taken over all individuals in the market, is
\begin{align*}
    \bar q_{t} = \bar\beta p_t + \bar\eta_t, \qquad \text{for } \bar q_{t} \equiv \frac{1}{M} \sum_{i=1}^M q_{it}, \bar \beta \equiv \frac{1}{M} \sum_{i=1}^M \beta_i, \bar\eta_t \equiv \frac{1}{M} \sum_{i=1}^M \eta_{it}.
\end{align*}

Suppose as before that we can write the inverse supply model as a linear model of average supply,
\begin{align*}
    \bar q_{t} &= \gamma + \delta p_{t} + \varepsilon_{t}.
\end{align*}
Solving for equilibrium price $p_t$ by setting demand equal to supply then gives
\begin{align*}
    p_t &= \frac{1}{\bar\beta - \delta} (\gamma + \varepsilon_{t} - \bar\eta_t ).
\end{align*}

We will maintain the following key assumptions.
\begin{assumption}\label{ass3}
    \begin{enumerate}[(i)]
    \item $\beta_i = \beta_0 + \omega_i$, for $\omega_i$ an i.i.d. random variable with $E[\omega_i] = 0$;
    \item $\eta_{it}$ is mean zero and i.i.d. over $i$ and $t$ and $\varepsilon_{t}$ is independent across $t$, with $Cov(\eta_{it}, \varepsilon_t) = 0$ for all $i, t$;
    \item $\{\omega_i\}_{i=1}^M$ is independent of $\{(\varepsilon_t, \eta_{1t}, \dots, \eta_{Mt})\}_{t=1}^T$.
\end{enumerate}
\end{assumption}

Assumption \ref{ass3} maintains similar assumptions to those imposed previously on preference and supply shocks, but additionally requires that the random coefficient be independent of both preference and supply shocks.

Our goal will be to estimate $\beta_0$, the mean of the random coefficient. We will show that the slope coefficient from the OLS regression of $q_{it}$ on $p_t$ has inconsistency that disappears with the number of consumers in the market $M$.

\begin{theorem}\label{thm:random}
    Under Assumption \ref{ass3} and an additional technical assumption on $\beta_i$ detailed in the Appendix, the slope estimate $\hat\beta$ from the regression of $q_{it}$ on $p_t$ has probability limit as $N,T \to \infty$ 
    \begin{align*}
        plim(\hat\beta) &= \beta_0 + \frac{1}{M} E \left[ \frac{-1}{\bar\beta - \delta} \right] \frac{E[\eta_{it}^2]}{E[p_t^2]}.
    \end{align*}
\end{theorem}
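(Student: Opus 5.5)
The plan is to treat $\hat\beta$ as the no-intercept OLS slope $\hat\beta = \sum_{i,t} p_t q_{it} / \sum_{i,t} p_t^2$. (If the regression includes an intercept, the same steps go through after demeaning.) Substituting $q_{it} = \beta_0 p_t + \omega_i p_t + \eta_{it}$ gives
\begin{align*}
\hat\beta - \beta_0 = \frac{\frac{1}{NT}\sum_{i,t} \omega_i p_t^2}{\frac{1}{NT}\sum_{i,t} p_t^2} + \frac{\frac{1}{NT}\sum_{i,t} p_t \eta_{it}}{\frac{1}{NT}\sum_{i,t} p_t^2}.
\end{align*}
As in Theorem \ref{thm:linear}, I will take for granted that there is enough independent variation for laws of large numbers to apply to each average. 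The three limits needed are $E[p_t^2]$, $E[\omega_i p_t^2]$ and $E[p_t\eta_{it}]$. The random coefficient enters $p_t$ only through $\bar\beta$, so the strategy throughout is to condition on $\{\omega_j\}_{j=1}^M$. By Assumption \ref{ass3}(iii), $\bar\beta$ is then fixed and $(\varepsilon_t, \eta_{1t}, \dots, \eta_{Mt})$ keeps its unconditional law. The technical assumption on $\beta_i$ will be that $\bar\beta - \delta$ is bounded away from zero, for instance $\beta_i \le \bar b < \delta$, or more generally $E[(\bar\beta-\delta)^{-2}] < \infty$. This makes every moment below finite.

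The key step is the cross term. Conditional on $\omega$,
\begin{align*}
E[p_t \eta_{it} \mid \omega] = \frac{1}{\bar\beta - \delta}\big(\gamma E[\eta_{it}] + Cov(\varepsilon_t, \eta_{it}) - E[\bar\eta_t \eta_{it}]\big) = \frac{-1}{\bar\beta - \delta}\frac{E[\eta_{it}^2]}{M}.
\end{align*}
This uses $E[\eta_{it}] = 0$, $Cov(\eta_{it}, \varepsilon_t) = 0$, and the independence of $\eta_{it}$ across $i$ from Assumption \ref{ass3}(ii). Taking expectations over $\omega$ gives $\frac{1}{M} E[\frac{-1}{\bar\beta-\delta}] E[\eta_{it}^2]$, which divided by $E[p_t^2]$ is exactly the claimed bias term.

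The main obstacle is the first term, $E[\omega_i p_t^2]$. At first sight it need not vanish, because $p_t^2$ depends on $\bar\beta = \beta_0 + \bar\omega$, which contains $\omega_i/M$. This term also creates the only real difference between averaging over the whole market and averaging over the sample. The formula claims no contribution from it, so two routes are available. The first is to argue that the sample average $\frac{1}{NT}\sum_{i,t}\omega_i p_t^2$ behaves like $\frac{1}{N}\sum_i \omega_i$ times a common factor, and hence tends to zero as $N$ grows. For $N = M$ this is exact, since $\sum_i \omega_i p_t^2 = M \bar\omega p_t^2$. The second is to write $E[\omega_i p_t^2] = E[\omega_i g(\bar\omega)]$, with $g(\bar\omega) = E[p_t^2 \mid \omega]$, expand $g$ around the leave-one-out mean, and use $E[\omega_i] = 0$ together with independence of $\omega_i$ from the others. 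This shows the term is $O(M^{-1})$, with a coefficient involving $Var(\omega_i)$.

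I would try the first route, in the limit where the sample exhausts the market, and fold any remaining piece into the technical assumption; one candidate is symmetry or a degenerate conditional mean of $\omega_i$ given $\bar\beta$. I would state explicitly which of these the proof uses. Finally, by Slutsky the ratio of the probability limits gives the stated expression.
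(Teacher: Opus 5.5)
Your cross-term computation matches the paper's: conditioning on $\{\omega_j\}$ is the same as the paper's use of Assumption~\ref{ass3}(iii) to factor $E[\bar\eta_t\eta_{it}/(\bar\beta-\delta)]$. The gap is the first term, which you leave unresolved and frame the wrong way. In this single-market model $p_t$ does not depend on $i$, so for \emph{every} sample size
\begin{align*}
\frac{\frac{1}{NT}\sum_{i,t}\omega_i p_t^2}{\frac{1}{NT}\sum_{i,t}p_t^2}
=\frac{\big(\frac1N\sum_{i=1}^N\omega_i\big)\big(\frac1T\sum_{t=1}^T p_t^2\big)}{\frac1T\sum_{t=1}^T p_t^2}
=\frac1N\sum_{i=1}^N\omega_i
\end{align*}
holds identically, not merely ``behaves like,'' and not only when $N=M$. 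Your first route is therefore the paper's argument. The paper writes $\hat\beta=\frac1N\sum_i\beta_i+(\text{cross ratio})$ and sends the first piece to $\beta_0$ by the law of large numbers for the i.i.d.\ mean-zero $\omega_i$ as $N\to\infty$. The moment $E[\omega_i p_t^2]$ never enters. Listing it among ``the three limits needed'' is the mistake: it is generally nonzero and of order $1/M$. Any argument routed through it can only give $\beta_0+O(1/M)$, not the exact expression stated. That includes your second route, which is the leave-one-out expansion the paper uses for Theorem~\ref{thm:random_time}, where sampled individuals face different prices and no factorization exists.

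The fallbacks you propose fail.
\begin{itemize}
\item Letting the sample exhaust the market is precisely the wrong limit. With $N=M$ the first ratio equals $\bar\omega=\bar\beta-\beta_0$, a nondegenerate random variable of order $M^{-1/2}$. It does not vanish as $T\to\infty$ and would swamp the $1/M$ bias.
\item A ``degenerate conditional mean of $\omega_i$ given $\bar\beta$'' contradicts the model. By exchangeability, $E[\omega_i\mid\bar\beta]=\bar\beta-\beta_0$ for every consumer in the market.
\item Symmetry of $\omega_i$ does not kill $E[\omega_i p_t^2]=E[\bar\omega\,g(\bar\omega)]$, because $g(x)\propto(x+\beta_0-\delta)^{-2}$ is not even.
\end{itemize}
No assumption beyond the paper's ($|\bar\beta-\delta|\ge\varepsilon$ a.s.\ and finite second moments) is needed. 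Commit to the exact factorization with $N\to\infty$ at fixed $M$, as in the paper's asymptotic approximation.
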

The additional technical assumption required for the result ensures that moments of equilibrium prices exist and are uniformly bounded.
Under this additional assumption, we obtain a rate of convergence of $O(1/M)$, just as for the linear demand model.

\section{A Random Coefficient Demand Model with Time Effect} 
\label{sec5}

In this section, we extend the random coefficient demand model from the previous section to allow for a fixed time effect. We show under similar conditions as before that the bias of the OLS slope estimator for the mean of the random coefficient will disappear as the number of consumers in each market grows. 

As before, suppose there is a good sold in many markets, each of which has $M$ consumers. Consider a sample of $N$ individuals, $i = 1,\dots,N$, each from a different market $m$ and observed over $T \geq 2$ time periods $t = 1,\dots, T$. In the sample we observe the quantity demanded $q_{it}$ and price $p_{it}$ for each individual $i$ in each time period $t$. As before, note that while we assume prices do not vary across individuals in the same market, in the sample prices vary across individuals who come from different markets.

Suppose that for each individual in every market, demand can be written as
\begin{align*}
    q_{it} &= \beta_i p_{it} + \phi_t + \eta_{it},
\end{align*}
where $\beta_i$ is a random coefficient and $\phi_t$ is a fixed time effect. 
Average demand in period $t$ and market $m$, where the average is taken over the individuals in market $m$ and $p_{mt}$ denotes the price in market $m$ and period $t$, is
\begin{align*}
    \bar q_{mt} = \bar\beta_m p_{mt} + \phi_t + \bar\eta_{mt}, \qquad \text{for } \bar q_{mt} \equiv \frac{1}{M} \sum_{i \in m} q_{it}, \bar \beta_m \equiv \frac{1}{M} \sum_{i \in m} \beta_i, \bar\eta_{mt} \equiv \frac{1}{M} \sum_{i \in m} \eta_{it}.
\end{align*}

Suppose as before that in each market $m$ we can write the inverse supply model as a linear model of average supply,
\begin{align*}
    \bar q_{mt} &= \gamma + \delta p_{mt} + \varepsilon_{mt},
\end{align*}
where $\varepsilon_{mt}$ is a supply shock term. Solving for equilibrium prices $p_{mt}$ in market $m$ by setting demand equal to supply, we obtain
\begin{align*}
    p_{mt} = \frac{1}{\bar\beta_m-\delta} (\gamma + \varepsilon_{mt} - \phi_t - \bar{\eta}_{mt} ).
\end{align*}

We will maintain the following key assumptions.
\begin{assumption}\label{ass4}
    \begin{enumerate}[(i)]
    \item $\beta_i = \beta_0 + \omega_i$, for $\omega_i$ an i.i.d. random variable with $E[\omega_i] = 0$;
    \item $\eta_{it}$ is mean zero and i.i.d. across $i$ in all markets and $t$ and $\varepsilon_{mt}$ is independent across $m$ and $t$, with $Cov(\eta_{it}, \varepsilon_{ms}) = 0$ for all $i, t, s, m$;
    \item $\{\omega_i: i \text{ in any market}\}$ is independent of $(\{\eta_{jt}: j \text{ in any market}, t = 1, \dots, T\}, \{\varepsilon_{mt}: m \text{ is any market}, t = 1,\dots,T\})$.
\end{enumerate}
\end{assumption}
We let $\{\eta_{jt}: j \text{ in any market}, t = 1, \dots, T\}$ denote the collection of $\eta_{it}$ across all periods $t$ and all individuals $j$ in every single market, and we let $\{\varepsilon_{mt}: m \text{ is any market}, t = 1,\dots,T\}$ denote the collection of $\varepsilon_{mt}$ across all periods $t$ and all markets $m$.
Assumption \ref{ass4} maintains similar assumptions to Assumption \ref{ass3} but requires stronger orthogonality of preference and supply shocks across all individuals in all markets and across all time periods. 

As before, our goal will be to estimate $\beta_0$, the mean of the random coefficient. We will consider the slope estimate from the regression of $q_{it}$ on $p_{it}$ and time dummies.

\begin{theorem}\label{thm:random_time}
    Under Assumption \ref{ass4} and additional technical assumptions detailed in the Appendix, the slope estimate $\hat\beta$ from the regression of $q_{it}$ on $p_{it}$ and time dummies has probability limit as $N,T \to \infty$
    \begin{align*}
        plim(\hat\beta) &= \beta_0 + O(1/M).
    \end{align*}
\end{theorem}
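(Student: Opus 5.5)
By Frisch--Waugh--Lovell, the slope on $p_{it}$ in the regression with time dummies equals the slope from regressing $\check q_{it} = q_{it} - \bar q_t$ on $\check p_{it} = p_{it} - \bar p_t$. Here $\bar q_t$ and $\bar p_t$ are cross-sectional sample means over the $N$ sampled individuals, each drawn from a different market. The time effect $\phi_t$ drops out of $\check q_{it}$. Writing $\beta_i = \beta_0 + \omega_i$ gives
\begin{align*}
    \check q_{it} = \beta_0 \check p_{it} + \big(\omega_i p_{it} - \tfrac{1}{N}\textstyle\sum_j \omega_j p_{jt}\big) + (\eta_{it} - \bar\eta_t).
\end{align*}
Hence
\begin{align*}
    \hat\beta - \beta_0 = \frac{\frac{1}{NT}\sum_{i,t} \check p_{it}\,(\omega_i p_{it} + \eta_{it}) + o_p(1)}{\frac{1}{NT}\sum_{i,t}\check p_{it}^2}.
\end{align*}
The cross terms involving $\bar\eta_t$ and $\frac{1}{N}\sum_j\omega_j p_{jt}$ multiply $\sum_i \check p_{it} = 0$, so they vanish exactly.

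As $N\to\infty$ with $T$ fixed (or $T$ growing), the sampled markets are independent across $i$. A law of large numbers therefore gives $\bar p_t \to \mu_t \equiv E[p_{mt}]$. The plim is then the ratio of
\begin{align*}
    \text{numerator } A &= E[(p_{it}-\mu_t)(\omega_i p_{it} + \eta_{it})], \\
    \text{denominator } D &= E[(p_{it}-\mu_t)^2],
\end{align*}
averaged over $t$. The technical assumptions, analogous to those for Theorem~\ref{thm:random}, bound $1/(\bar\beta_m-\delta)$ away from zero uniformly in $M$. Under them $D$ converges to a positive limit, namely the variance of $(\gamma+\varepsilon_{mt}-\phi_t)/(\beta_0-\delta)$. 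So it suffices to show $A = O(1/M)$, which I would do term by term.

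\emph{The $\eta$ term.} Write $p_{mt} = g_m\,(\gamma+\varepsilon_{mt}-\phi_t-\bar\eta_{mt})$ with $g_m = 1/(\bar\beta_m-\delta)$. By Assumption~\ref{ass4}(iii), $g_m$ is independent of $\eta$ and $\varepsilon$. Also $E\eta_{it}=0$ and $\eta_{it}$ is uncorrelated with $\varepsilon$. Hence
\begin{align*}
    E[p_{it}\eta_{it}] = -E[g_m]\,E[\bar\eta_{mt}\eta_{it}] = -E[g_m]\,E[\eta_{it}^2]/M,
\end{align*}
and $\mu_t E[\eta_{it}]=0$. This reproduces the $O(1/M)$ bias of Theorem~\ref{thm:random}.

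\emph{The $\omega$ term.} This term, $E[\omega_i p_{it}(p_{it}-\mu_t)]$, is where the random coefficient enters, and I expect it to be the main obstacle. Condition on $\bar\beta_m$ and use independence of $\omega$ from the shocks. Then $E[\omega_i p_{it}^2] = E[\omega_i g_m^2]\,E[(\gamma+\varepsilon_{mt}-\phi_t-\bar\eta_{mt})^2]$, and similarly for the term involving $\mu_t$. Since individual $i$ is one of the $M$ members of market $m$ and the $\omega$'s are exchangeable, $E[\omega_i h(\bar\beta_m)] = E[(\bar\beta_m-\beta_0)\,h(\bar\beta_m)]$ for any function $h$. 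I would expand $h(\bar\beta_m) = h(\beta_0) + h'(\beta_0)(\bar\beta_m-\beta_0) + R$, with $h(b)=(b-\delta)^{-2}$ or $(b-\delta)^{-1}$. The zeroth-order term has mean zero. The first-order term is $h'(\beta_0)\,Var(\omega)/M$. The remainder is $O(E|\bar\beta_m-\beta_0|^3) = o(1/M)$, or $O(1/M)$ via Cauchy--Schwarz with $E(\bar\beta_m-\beta_0)^4 = O(M^{-2})$.

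The delicate point is that this Taylor control needs uniform bounds on $h$ and its derivatives near the support of $\bar\beta_m$. That is exactly what the technical assumption ensures: $\beta_i$ is bounded away from $\delta$ so that $|\bar\beta_m-\delta|\ge c>0$, together with finite fourth moments of $\omega_i$, $\varepsilon_{mt}$ and $\eta_{it}$. I would first try to prove everything under that support condition. Combining the two pieces gives $A = O(1/M)$ and hence $plim(\hat\beta)=\beta_0+O(1/M)$.
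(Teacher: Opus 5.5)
Your proof is correct, but it handles the random-coefficient part by a different route from the paper. The $\eta$ term is handled exactly as in the paper.

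\textbf{The paper's route.} The paper keeps the sample centering $p_{it}-\bar p_t$ and splits the $\omega$ contribution into $E[\tilde p_{it}^2\omega_i]$ and $E[\tilde p_{it}\bar p_t\omega_i]$. It then decouples $\omega_i$ from prices using the leave-one-out price $u_{it}h(\omega_{-i})$, which drops $\omega_i$ from $\bar\beta_m$ and so is independent of $\omega_i$. By the mean value theorem, replacing $p_{it}$ by this price costs at most $|u_{it}||\omega_i|/(M\epsilon^2)$. H\"older's inequality then gives $O(1/M)$, at the cost of uniformly bounded fourth moments.

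\textbf{Your route.} You center at the population mean $\mu_t$, which merges the two $\omega$ terms into one and avoids the dependence of $\bar p_t$ on $\omega_i$. You use the multiplicative form $p_{mt}=g_m u_{mt}$, with $g_m$ independent of $u_{mt}$, to reduce everything to scalar moments $E[\omega_i g_m^k]$. Exchangeability then turns $E[\omega_i h(\bar\beta_m)]$ into $E[(\bar\beta_m-\beta_0)h(\bar\beta_m)]$.

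\textbf{What each buys.} Your route needs weaker moments: the bound $|E[(\bar\beta_m-\beta_0)(h(\bar\beta_m)-h(\beta_0))]|\le \sup|h'|\,Var(\omega_i)/M$ uses only second moments. Your Taylor step also gives an explicit leading bias term $h'(\beta_0)Var(\omega_i)/M$ for each piece, which the paper does not derive. The paper's decoupling argument, in turn, uses only that changing one consumer's $\omega_i$ moves the price by $O(1/M)$ in $L^2$. It does not need prices to depend on the $\omega$'s solely through $\bar\beta_m$, nor the expectation to factor. That makes it better suited to the nonlinear extensions the authors mention.

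\textbf{Points to make explicit.}
\begin{enumerate}[(i)]
\item Expanding $h$ around $\beta_0$ needs $h''$ bounded on the whole segment between $\beta_0$ and $\bar\beta_m$, not just near the support of $\bar\beta_m$. This does hold. Requiring $|\bar\beta_m-\delta|\ge c$ for every $M$ (or the paper's condition at $M=1$) forces the support of $\beta_i$ into one half-line at distance at least $c$ from $\delta$. Hence $\beta_0$, every $\bar\beta_m$, and the segment between them lie there too.
\item A positive limit for $D$ requires $Var(\varepsilon_{mt})>0$. The paper instead simply assumes $E[\tilde p_{it}^2]$ is bounded away from zero.
\item The $+o_p(1)$ in your display is unnecessary, since those cross terms vanish exactly, as you note yourself.
\end{enumerate}
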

The additional technical assumptions ensure that sufficient moments of prices and demeaned prices exist and are uniformly bounded.

\section{Conclusion}

In this paper we showed how panel data can be used to estimate individual demand models without instrumental variables, despite prices being simultaneously determined in markets. 
This result relies on the key assumption that idiosyncratic individual demand shocks are orthogonal to supply shocks and other individuals' idiosyncratic demand shocks, which is plausible in a wide variety of settings.
We considered linear demand models and random coefficient demand models, both with and without a fixed time effect, together with a linear supply model. We characterized the rate at which bias of individual demand estimates shrinks as the number of individuals in each market grows. 

An important direction for future work is extending these results from linear demand models to more general models of demand and from linear supply models to more general models of supply. We are currently working on such extensions, which rely on the intuition that individual-specific random coefficient models only require time homogeneity of individual preferences. Showing these results  requires different proof techniques.

\small
\singlespacing
{
\bibliographystyle{chicago} 
\bibliography{ref}
}

\appendix
\section{Appendix}

\subsection*{Proof of Theorem \ref{thm:linear_time}}
\begin{proof}

Recall that 
\begin{align*}
    \underbrace{q_{it} - \bar q_t - \bar q_i + \bar q}_{\displaystyle \tilde q_{it}} &= \beta \underbrace{\left(p_{it} - \bar p_t - \bar p_i + \bar p \right)}_{\displaystyle \tilde p_{it}} + \underbrace{\eta_{it} - \bar\eta_t - \bar\eta_i + \bar\eta}_{\displaystyle \tilde \eta_{it}}, \\
    \underbrace{\left(p_{it} - \bar p_t - \bar p_i + \bar p \right)}_{\displaystyle \tilde p_{it}} &= \frac{1}{\beta-\delta} \big( \underbrace{\varepsilon_{it} - \bar \varepsilon_t - \bar \varepsilon_i + \bar \varepsilon}_{\displaystyle \tilde \varepsilon_{it}} - (\underbrace{\bar\eta_{it}^m - \bar \eta_t^m - \bar \eta_i^m + \bar \eta^m}_{\displaystyle \tilde \eta_{it}^m}) \big),
\end{align*}
defining 
\begin{align*}
    \bar q_t &= \frac{1}{N} \sum_{i=1}^N q_{it}, \quad \bar q_i = \frac{1}{T} \sum_{t=1}^T q_{it}, \quad \bar q = \frac{1}{NT} \sum_{i=1}^N \sum_{t=1}^T q_{it}, \\
    \bar p_t &= \frac{1}{N} \sum_{i=1}^N p_{it}, \quad \bar p_i = \frac{1}{T} \sum_{t=1}^T p_{it}, \quad \bar p = \frac{1}{NT} \sum_{i=1}^N \sum_{t=1}^T p_{it}, \\
    \bar \eta_t &= \frac{1}{N} \sum_{i=1}^N \eta_{it}, \quad \bar \eta_i = \frac{1}{T} \sum_{t=1}^T \eta_{it}, \quad \bar \eta = \frac{1}{NT} \sum_{i=1}^N \sum_{t=1}^T \eta_{it}, \\
    \bar \varepsilon_t &= \frac{1}{N} \sum_{i=1}^N \varepsilon_{it}, \quad \bar \varepsilon_i = \frac{1}{T} \sum_{t=1}^T \varepsilon_{it}, \quad \bar \varepsilon = \frac{1}{NT} \sum_{i=1}^N \sum_{t=1}^T \varepsilon_{it}, \\
    \bar \eta_t^m &= \frac{1}{N} \sum_{i=1}^N \bar\eta_{it}^m, \quad \bar \eta_i^m = \frac{1}{T} \sum_{t=1}^T \bar\eta_{it}^m, \quad \bar \eta^m = \frac{1}{NT} \sum_{i=1}^N \sum_{t=1}^T \bar\eta_{it}^m.
\end{align*}
For individual $i$ in market $m$, under Assumption \ref{ass2} it follows that
\begin{align*}
    E[\tilde p_{it} \tilde \eta_{it}] 
    &= \frac{1}{\beta-\delta} \left( E[\tilde \varepsilon_{it} \tilde \eta_{it}] - E[ \tilde \eta_{it}^m \tilde \eta_{it}] \right).
\end{align*}

Note that under Assumption \ref{ass2}, 
\begin{align*}
    E[ \tilde \varepsilon_{it} \tilde \eta_{it}] &= Cov(\varepsilon_{it},\eta_{it}) - Cov(\varepsilon_{it},\bar\eta_t) - Cov(\varepsilon_{it},\bar\eta_i) + Cov(\varepsilon_{it},\bar\eta) \\
    & \quad -Cov(\bar\varepsilon_t,\eta_{it}) + Cov(\bar\varepsilon_t,\bar\eta_t) + Cov(\bar\varepsilon_t,\bar\eta_i) - Cov(\bar\varepsilon_t,\bar\eta) \\
    & \quad -Cov(\bar\varepsilon_i,\eta_{it}) + Cov(\bar\varepsilon_i,\bar\eta_t) + Cov(\bar\varepsilon_i,\bar\eta_i) - Cov(\bar\varepsilon_i,\bar\eta) \\
    & \quad + Cov(\bar\varepsilon,\eta_{it}) - Cov(\bar\varepsilon,\bar\eta_t) - Cov(\bar\varepsilon,\bar\eta_i) + Cov(\bar\varepsilon,\bar\eta) \\
    &= 0.
\end{align*}

For individual $i$ in market $m$,
\begin{align*}
    Cov(\bar\eta_{it}^m, \eta_{it}) &= \frac{1}{M} \sum_{j \in m} Cov(\eta_{jt}, \eta_{it}) = \frac{1}{M} Var(\eta_{it}).
\end{align*}
Then
\begin{align*}
    E[ \tilde \eta_{it}^m \tilde \eta_{it}] &= Cov(\bar\eta^m_{it},\eta_{it}) - Cov(\bar\eta^m_{it},\bar\eta_t) - Cov(\bar\eta^m_{it},\bar\eta_i) + Cov(\bar\eta^m_{it},\bar\eta) \\
    & \quad -Cov(\bar\eta^m_t,\eta_{it}) + Cov(\bar\eta^m_t,\bar\eta_t) + Cov(\bar\eta^m_t,\bar\eta_i) - Cov(\bar\eta^m_t,\bar\eta) \\
    & \quad -Cov(\bar\eta^m_i,\eta_{it}) + Cov(\bar\eta^m_i,\bar\eta_t) + Cov(\bar\eta^m_i,\bar\eta_i) - Cov(\bar\eta^m_i,\bar\eta) \\
    & \quad + Cov(\bar\eta^m,\eta_{it}) - Cov(\bar\eta^m,\bar\eta_t) - Cov(\bar\eta^m,\bar\eta_i) + Cov(\bar\eta^m,\bar\eta) \\
    &= Cov(\bar\eta^m_{it},\eta_{it}) - \frac{1}{N} Cov(\bar\eta^m_{it},\eta_{it}) - \frac{1}{T} Cov(\bar\eta^m_{it},\eta_{it}) + \frac{1}{NT} Cov(\bar\eta^m_{it},\eta_{it}) \\
    & \quad -\frac{1}{N} Cov(\bar\eta^m_{it},\eta_{it}) + \frac{1}{N} Cov(\bar\eta^m_{it},\eta_{it}) + \frac{1}{NT} Cov(\bar\eta^m_{it},\eta_{it}) - \frac{1}{NT} Cov(\bar\eta^m_{it},\eta_{it}) \\
    & \quad -\frac{1}{T} Cov(\bar\eta^m_{it},\eta_{it}) + \frac{1}{NT} Cov(\bar\eta^m_{it},\eta_{it}) + \frac{1}{T} Cov(\bar\eta^m_{it},\eta_{it}) - \frac{1}{NT} Cov(\bar\eta^m_{it},\eta_{it}) \\
    & \quad + \frac{1}{NT} Cov(\bar\eta^m_{it},\eta_{it}) - \frac{1}{NT} Cov(\bar\eta^m_{it},\eta_{it}) - \frac{1}{NT} Cov(\bar\eta^m_{it},\eta_{it}) + \frac{1}{NT} Cov(\bar\eta^m_{it},\eta_{it}) \\
    &= Cov(\bar\eta^m_{it},\eta_{it}) \left(1 - \frac{1}{N} - \frac{1}{T} + \frac{1}{NT} \right) \\
    &= Var(\eta_{it}) \left(\frac{1}{M} - \frac{1}{MN} - \frac{1}{MT} + \frac{1}{MNT} \right)
\end{align*}

Thus
\begin{align*}
    E[\tilde p_{it} \tilde \eta_{it}] &= \frac{-1}{\beta-\delta}Var(\eta_{it}) \left(\frac{1}{M} - \frac{1}{MN} - \frac{1}{MT} + \frac{1}{MNT} \right)
\end{align*}
which has limit as $N,T \to \infty$, keeping $M$ fixed,
\begin{align*}
    \frac{1}{M}\frac{-1}{\beta-\delta}Var(\eta_{it}).
\end{align*}

Then the OLS estimator has probability limit as $N,T \to \infty$
\begin{align*}
    plim(\hat\beta)=\beta+\frac{E[\tilde p_{it}\tilde \eta_{it}] }{E[\tilde p_{it}^2]}
    &=\beta + \frac{1}{M}\frac{-1}{\beta-\delta}\frac{Var(\eta_{it})}{E[\tilde p_{it}^2]}.
\end{align*}

\end{proof}

\subsection*{Proof of Theorem \ref{thm:random}}
\begin{proof}
As an additional regularity assumption we will require that $\omega_i, \varepsilon_{t}, \eta_{it}$ have finite second moments. We will also impose that for all $M$, there exists some $\varepsilon > 0$ such that $|\bar \beta - \delta| \geq \varepsilon$ with probability 1. These assumptions ensure $\displaystyle E \left[ \frac{1}{\bar\beta - \delta} \right]$ and second moments of price $p_t$ exist and are finite. \\

The OLS coefficient is given by
\begin{align*}
    \hat\beta &= \frac{\frac{1}{T} \sum_{t=1}^T \frac{1}{N} \sum_{i=1}^N p_t q_{it}}{\frac{1}{T} \sum_{t=1}^T p_t^2} \\
    &= \frac{\left(\frac{1}{T} \sum_{t=1}^T p_t^2 \right) \left(\frac{1}{N} \sum_{i=1}^N \beta_i\right)}{\frac{1}{T} \sum_{t=1}^T p_t^2} + \frac{\frac{1}{T} \sum_{t=1}^T \frac{1}{N} \sum_{i=1}^N p_t\eta_{it}}{\frac{1}{T} \sum_{t=1}^T p_t^2}.
\end{align*}
Then as $N,T \to \infty$,
\begin{align*}
    plim(\hat\beta) &= E[\beta_i] + \frac{E[p_t \eta_{it}]}{E[p_t^2]} \\
    &= \beta_0 + \frac{E[p_t \eta_{it}]}{E[p_t^2]}.
\end{align*}

Note that
\begin{align*}
    E[p_t \eta_{it}] &= \gamma E \left[\frac{\eta_{it}}{\bar\beta - \delta} \right] + E \left[ \frac{\varepsilon_t\eta_{it}}{\bar\beta - \delta} \right] - E \left[ \frac{\bar\eta_t\eta_{it}}{\bar\beta - \delta} \right] \\
    &= \gamma E \left[\frac{1}{\bar\beta - \delta} \right]E[\eta_{it}] + E \left[ \frac{1}{\bar\beta - \delta} \right] E [\varepsilon_t\eta_{it}] - E \left[ \frac{1}{\bar\beta - \delta} \right]E[\bar\eta_t\eta_{it}] \\
    &= - \frac{1}{M} E \left[ \frac{1}{\bar\beta - \delta} \right]E[\eta_{it}^2]
\end{align*}
using independence Assumption \ref{ass3}(iii), that $Cov(\eta_{it},\varepsilon_t) = 0$, that $\eta_{it}$ is i.i.d. over $i$ and $t$, and that $\eta_{it}$ is mean zero.
\end{proof}

\subsection*{Proof of Theorem \ref{thm:random_time}}
\begin{proof}
As an additional regularity assumption we will require that $\omega_i, \varepsilon_{mt}, \eta_{it}$ all have uniformly bounded fourth moment and that $\phi_t$ is uniformly bounded. We will also impose that there exists an $\epsilon > 0$ such that for all $\rho \in [0,1]$ and for all $M$, $|\beta_0 - \delta + \frac{1}{M} \sum_{j \in m, j \neq i} \omega_j + \rho \omega_i/M| \geq \epsilon$ almost surely. 

Finally, for notational simplicity we assume that $E[\tilde p_{it}^2], E[\tilde p_{it}^2\omega_i], E[\tilde p_{it}\bar p_t \omega_i],$ and $E[\tilde p_{it}\eta_{it}]$ do not depend on $t$ and that $E[\tilde p_{it}^2]$ is bounded away from zero, where $\tilde p_{it} = p_{it} - \frac{1}{N} \sum_{i=1}^N p_{it}$. The assumption that moments do not depend on $t$ can be easily relaxed at the cost of additional notation. \\

By Frisch-Waugh-Lovell the slope coefficient from the regression of $q_{it}$ on $p_{it}$ and time dummies is
\begin{align*}
    \hat\beta &= \frac{\sum_{i=1}^N \sum_{t=1}^T \tilde p_{it} q_{it}}{\sum_{i=1}^N \sum_{t=1}^T \tilde p_{it} ^2} \\
    &= \frac{\sum_{i=1}^N \sum_{t=1}^T \tilde p_{it} (\beta_i p_{it} + \eta_{it})}{\sum_{i=1}^N \sum_{t=1}^T \tilde p_{it} ^2} \\
    &= \frac{\sum_{i=1}^N \sum_{t=1}^T (\tilde p_{it}^2 \beta_i + \tilde p_{it} \bar p_t \beta_i + \tilde p_{it} \eta_{it})}{\sum_{i=1}^N \sum_{t=1}^T \tilde p_{it}^2} \\
    &= \beta_0 + \frac{\sum_{i=1}^N \sum_{t=1}^T \tilde p_{it}^2 \omega_i}{\sum_{i=1}^N \sum_{t=1}^T \tilde p_{it}^2} + \frac{\sum_{i=1}^N \sum_{t=1}^T \tilde p_{it} \bar p_t \omega_i}{\sum_{i=1}^N \sum_{t=1}^T \tilde p_{it}^2} + \frac{\sum_{i=1}^N \sum_{t=1}^T \tilde p_{it} \eta_{it}}{\sum_{i=1}^N \sum_{t=1}^T \tilde p_{it}^2},
\end{align*}
where $\tilde p_{it} = p_{it} - \bar p_t$ and $\bar p_t \equiv \frac{1}{N} \sum_{i=1}^N p_{it}$, so $\sum_{i,t} \tilde p_{it} \bar p_t \beta_0 = 0$.

Then as $N,T \to \infty$,
\begin{align*}
    plim(\hat\beta) &= \beta_0 + \frac{E[\tilde p_{it}^2 \omega_i]}{E[\tilde p_{it}^2]} + \frac{E[\tilde p_{it} \bar p_t \omega_i]}{E[\tilde p_{it}^2]} + \frac{E[\tilde p_{it} \eta_{it}]}{E[\tilde p_{it}^2]}.
\end{align*}

Term $E[\tilde p_{it} \eta_{it}]$:

Recall the equilibrium price expression: we know
\begin{align*}
    p_{it} &= \frac{1}{\bar \beta_i - \delta} (\gamma + \varepsilon_{it} - \phi_t - \bar \eta_{it}),
\end{align*}
where we index $\bar\beta_i,\varepsilon_{it},$ and $\bar\eta_{it}$ by $i$ because each individual $i$ comes from a different market $m$.

Note 
\begin{align*}
    E[\tilde p_{it} \eta_{it}] &= E[p_{it}\eta_{it}] - E[\bar p_t \eta_{it}] \\
    &= E[p_{it}\eta_{it}] - \frac{1}{N} E[p_{it} \eta_{it}] \\
    &= -\frac{N-1}{N} E\left[\frac{\bar\eta_{it}\eta_{it}}{\bar\beta_i - \delta}\right] \\
    &= -\frac{N-1}{N} \frac{1}{M} E\left[\frac{1}{\bar\beta_i - \delta}\right] E[\eta_{it}^2]
\end{align*}
under Assumptions \ref{ass4}(ii) and \ref{ass4}(iii).
But since we've taken $N,T \to \infty$, this means
\begin{align*}
    E[\tilde p_{it} \eta_{it}] &= O(1/M)
\end{align*}
under the additional assumptions, which ensure $\displaystyle E\left[\frac{1}{\bar\beta_i - \delta}\right]$ and $E[\eta_{it}^2]$ are uniformly bounded. \\

For the other two terms, we must deal with the fact that equilibrium prices depend on $\beta_i$ and thus $\omega_i$. We would like to show that this dependence is ``small,'' that is, $O(1/M)$.

For this, we will define additional notation. Define
\begin{align*}
    \omega_{-i} &\equiv \frac{1}{M} \sum_{j \in m, j \neq i} \omega_j, \qquad u_{it} \equiv \gamma + \varepsilon_{it} - \phi_t - \bar \eta_{it}, \qquad h(x) \equiv \frac{1}{x + \beta_0 - \delta},
\end{align*}
so that $p_{it} = u_{it} h(\omega_i/M + \omega_{-i}).$
Define $$\check p_{it} = u_{it} h(\omega_{-i})$$ to be the ``$\omega_i$ leave out'' price, and let $$\bar{\check{p}}_{t} = \frac{1}{N} \sum_{i=1}^N \check p_{it}.$$ Then define $$\tilde{\check{p}}_{it} \equiv \check p_{it} - \bar{\check{p}}_{t}.$$

Note that by Assumption \ref{ass4}(i) and (iii) we have $\omega_i$ is independent of $(\omega_{-i}, u_{it})$. Thus $\omega_i$ is independent of $(\check{p}_{it}, \bar{\check{p}}_t,\tilde{\check{p}}_{it})$.

Furthermore, by the additional technical assumption we have $|h(x)| \leq 1/\epsilon$ and $|h'(x)| = 1/(x+\beta_0-\delta)^2 \leq 1/\epsilon^2$ for any $x$ of the form $x = \rho\omega_i/M + \omega_{-i}$ for some $\rho \in [0,1]$.

Thus $|p_{it}| \leq |u_{it}|/\epsilon$ and $|\check{p}_{it}| \leq |u_{it}|/\epsilon$.  \\

Term $E[\tilde p_{it}^2 \omega_i]$:
Because $\tilde {\check{p}}_{it}$ is independent of $\omega_i$, which has zero mean,
\begin{align*}
    E[\tilde p_{it}^2 \omega_i] &= E[(\tilde p_{it}^2 - \tilde{\check{p}}_{it}^2) \omega_i] + E[\tilde {\check{p}}_{it}^2 \omega_i] 
    = E[(\tilde p_{it} + \tilde{\check{p}}_{it})(\tilde p_{it} - \tilde{\check{p}}_{it}) \omega_i] \\
    \Rightarrow |E[\tilde p_{it}^2 \omega_i]| &\leq E[|\tilde p_{it} + \tilde{\check{p}}_{it}||\tilde p_{it} - \tilde{\check{p}}_{it}||\omega_i|]
    \leq (E[|\tilde p_{it} + \tilde{\check{p}}_{it}|^4])^{1/4} (E[|\tilde p_{it} - \tilde{\check{p}}_{it}|^2])^{1/2} (E[|\omega_i|^4])^{1/4}
\end{align*}
by Jensen's and H\"older's inequalities.

By the additional regularity assumptions we know $p_{it}, \check p_{it}, \omega_i$ have uniformly bounded fourth moment, meaning $\tilde p_{it}$ and $\tilde{\check{p}}_{it}$ do as well. Thus $(E[|\tilde p_{it} + \tilde{\check{p}}_{it}|^4])^{1/4}$ and $(E[|\omega_i|^4])^{1/4}$ are $O(1)$. 

I now show that $(E[|\tilde p_{it} - \tilde{\check{p}}_{it}|^2])^{1/2} = O(1/M)$, which means $E[\tilde p_{it}^2 \omega_i] = O(1/M)$.

Note
\begin{align*}
    |\tilde p_{it} - \tilde{\check{p}}_{it}| &\leq |p_{it} - \check p_{it}| + |\bar p_t - \bar{\check{p}}_{t}| \\
    \Rightarrow E[|\tilde p_{it} - \tilde{\check{p}}_{it}|^2] &\leq 2E[|p_{it} - \check p_{it}|^2] + 2E[|\bar p_t - \bar{\check{p}}_{t}|^2].
\end{align*}
We can write
\begin{align*}
    |p_{it} - \check p_{it}| &= |u_{it}||h(\omega_i/M + \omega_{-i}) - h(\omega_{-i})|.
\end{align*}
By the mean value theorem and conditions above we have, for some $\rho \in (0,1)$,
\begin{align*}
    |h(\omega_i/M + \omega_{-i}) - h(\omega_{-i})| &= |h'(\rho\omega_i/M + \omega_{-i}) \omega_i/M| \leq \frac{|\omega_i|}{M \epsilon^2}.
\end{align*}
Then 
\begin{align*}
    E[|p_{it} - \check p_{it}|^2] &\leq \frac{1}{M^2\epsilon^4} E[|u_{it}|^2|\omega_{i}|^2] \leq \frac{1}{M^2\epsilon^4} E[|u_{it}|^4]^{1/2} E[|\omega_{i}|^4]^{1/2} = O(1/M^2)
\end{align*}
and
\begin{align*}
    E[|p_{it} - \check p_{it}|] &\leq \frac{1}{M\epsilon^2} E[|u_{it}||\omega_{i}|] \leq \frac{1}{M\epsilon^2} E[|u_{it}|^2]^{1/2} E[|\omega_{i}|^2]^{1/2} = O(1/M)
\end{align*}
by Cauchy-Schwarz and finite fourth moments. 

Meanwhile we can write 
\begin{align*}
    \bar p_t - \bar{\check{p}}_{t} &= \frac{1}{N} \sum_{i=1}^N (p_{it} - \check p_{it}) \\
    \Rightarrow |\bar p_t - \bar{\check{p}}_{t}|^2 &\leq \frac{1}{N} \sum_{i=1}^N |p_{it} - \check p_{it}|^2 \\
    \Rightarrow E[|\bar p_t - \bar{\check{p}}_{t}|^2] &\leq \frac{1}{N} \sum_{i=1}^N E[|p_{it} - \check p_{it}|^2] \leq O(1/M^2).
\end{align*}
by Cauchy-Schwarz.

Thus $E[|\tilde p_{it} - \tilde{\check{p}}_{it}|^2] = O(1/M^2)$, so $(E[|\tilde p_{it} - \tilde{\check{p}}_{it}|^2])^{1/2} = O(1/M)$. \\

Term $E[\tilde p_{it} \bar p_t \omega_i]$:
Because $\omega_i$ has zero mean and is independent of $(\check{p}_{it}, \bar{\check{p}}_t,\tilde{\check{p}}_{it})$,
\begin{align*}
    E[\tilde p_{it} \bar p_t \omega_i] &= E[(\tilde p_{it} \bar p_t - \tilde{\check{p}}_{it} \bar{\check{p}}_{t}) \omega_i] + E[\tilde{\check{p}}_{it} \bar{\check{p}}_{t} \omega_i] = E[(\tilde p_{it} - \tilde{\check{p}}_{it}) \bar p_t \omega_i] + E[\tilde{\check{p}}_{it}(\bar p_t - \bar{\check{p}}_{t}) \omega_i] \\
    \Rightarrow |E[\tilde p_{it} \bar p_t \omega_i]| &\leq E[|\tilde p_{it} - \tilde{\check{p}}_{it}| |\bar p_t| |\omega_i|] + E[|\tilde{\check{p}}_{it}| |\bar p_t - \bar{\check{p}}_{t}| |\omega_i|]
\end{align*}
by triangle and Jensen's inequalities.

Applying H\"older's inequality to each term, we obtain 
\begin{align*}
    E[|\tilde p_{it} - \tilde{\check{p}}_{it}| |\bar p_t| |\omega_i|] &\leq (E[|\tilde p_{it} - \tilde{\check{p}}_{it}|^2])^{1/2} (E[ |\bar p_t|^4 ])^{1/4} (E[|\omega_i|^4])^{1/4} \\
    E[|\tilde{\check{p}}_{it}||\bar p_t - \bar{\check{p}}_{t}|| \omega_i|] &\leq (E[|\tilde{\check{p}}_{it}|^4])^{1/4} (E[|\bar p_t - \bar{\check{p}}_{t}|^2])^{1/2} (E[|\omega_i|^4])^{1/4}.
\end{align*}
By the additional regularity assumption we know $p_{it}, \check p_{it}, \omega_i$ have uniformly bounded fourth moment, meaning $(E[| \omega_i|^4])^{1/4}, (E[ |\bar p_t|^4 ])^{1/4}, $ and $(E[|\tilde{\check{p}}_{it}|^4])^{1/4}$ are $O(1)$. 
From the above argument we know $(E[|\tilde p_{it} - \tilde{\check{p}}_{it}|^2])^{1/2}$ is $O(1/M)$. In the proof that $(E[|\tilde p_{it} - \tilde{\check{p}}_{it}|^2])^{1/2} = O(1/M)$ above we showed that $E[|\bar p_t - \bar{\check{p}}_{t}|^2] = O(1/M^2)$. 

Putting these rates together, we obtain $E[\tilde p_{it} \bar p_t \omega_i] = O(1/M)$. \\

We conclude that
\begin{align*}
    plim(\hat\beta) &= \beta_0 + \frac{E[\tilde p_{it}^2 \omega_i]}{E[\tilde p_{it}^2]} + \frac{E[\tilde p_{it} \bar p_t \omega_i]}{E[\tilde p_{it}^2]} + \frac{E[\tilde p_{it} \eta_{it}]}{E[\tilde p_{it}^2]}
    = \beta_0 + O(1/M)
\end{align*}
under the additional assumptions.
\end{proof}

\end{document}